\newcommand{\email}[1]{\texttt{\small #1}}
\newtheorem{proposition}{Proposition}
\author[1]{Yiping Guo \footnote{Email: \email{y246guo@uwaterloo.ca} (Yiping Guo)}}
\author[2]{Howard Bondell \footnote{Email: \email{howard.bondell@unimelb.edu.au} (Howard Bondell)}}
\affil[1]{\normalsize Department of Statistics and Actuarial Science, University of Waterloo}
\affil[2]{\normalsize School of Mathematics and Statistics, University of Melbourne}
\title{On Robust Probabilistic Principal Component Analysis using Multivariate \textit{t}-Distributions}
\date{}
\begin{document}
\maketitle

\begin{abstract}
	\noindent Probabilistic principal component analysis (PPCA) is a probabilistic reformulation of principal component analysis (PCA), under the framework of a Gaussian latent variable model. To improve the robustness of PPCA, it has been proposed to change the underlying Gaussian distributions to multivariate \textit{t}-distributions. Based on the representation of \textit{t}-distribution as a scale mixture of Gaussian distributions, a hierarchical model is used for implementation. However, in the existing literature, the hierarchical model implemented does not yield the equivalent interpretation. 
	
	In this paper, we present two sets of equivalent relationships between the high-level multivariate $t$-PPCA framework and the hierarchical model used for implementation. In doing so, we clarify a current misrepresentation in the literature, by specifying the correct correspondence. In addition, we discuss the performance of different multivariate $t$ robust PPCA methods both in theory and simulation studies, and propose a novel Monte Carlo expectation-maximization (MCEM) algorithm to implement one general type of such models. 
\end{abstract}
\textbf{Keywords}: 
Heavy tailed regression; Latent variable; Principal components; Robust estimation.

\section{Introduction}
Principal component analysis (PCA) is a powerful multivariate statistical technique for dimensionality reduction, which is widely used in many areas such as image processing and pattern recognition. Originally, PCA finds the best linear projection of a high-dimensional data set onto a lower-dimensional space using eigendecomposition or singular value decomposition, such that the reconstruction error is minimized. However, since traditional PCA is a deterministic approach, one limitation is the absence of a probabilistic framework which can be then used in conjunction with likelihood-based or Bayesian approaches.

\cite{tipping1999probabilistic} proposed the concept of probabilistic principal component analysis (PPCA), which reformulates the PCA problem from a Gaussian latent variable model point of view. Denote the PCA score vector in the \textit{d}-dimensional latent space as $\bm{z}$, we assume that it follows a standard normal distribution, that is, $\bm{z}\sim N(\bm{0},\bm{I})$. Given the PCA score, we also assume the data vector $\bm{x}$ in the \textit{q}-dimensional space typically with $q\gg d$, is also normally distributed, that is, $\bm{x}|\bm{z}\sim N(\bm{W}\bm{z}+\bm{\mu},\sigma^2 \bm{I})$. After observing $\bm{x}$, PPCA then estimates all unknown parameters $\bm{W},\bm{\mu},\sigma^2$ using maximum likelihood method. \cite{tipping1999probabilistic} showed that the maximum likelihood estimator (MLE) of the transformation matrix $\bm{W}$ spans the principal space, and the estimated posterior mean of the latent vector can be treated as the ``principal components". \cite{tipping1999probabilistic} also proved that when $\sigma^2 \to 0$, the posterior mean reduces to the orthogonal projection of the observed data points onto the latent space, and so it recovers the conventional PCA model. Although the analytical solution to the standard PPCA exists, an alternative is to use the expectation-maximization (EM) algorithm \citep{dempster1977maximum}. One advantage of using the EM algorithm is to avoid working directly with the sample covariance matrix, which is computationally intensive for high dimensional data. Another advantage is that it can generalise to the case that we choose to make modifications to the underlying distributions of the data $\bm{x}|\bm{z}$ and/or the latent variable $\bm{z}$.  

It is well known that the traditional PCA lacks robustness because the structure of sample variance matrix can be highly influenced by extreme values. Many researchers have worked on robust PCA but no approach is perfect under all situations. A classical category of approaches is to construct a robust sample covariance matrix, using the concept of Mahalanobis distance \citep{campbell1980robust}. \cite{candes2011robust} proposed to decompose the sample covariance matrix into a low-rank component (for nonoutliers) and a sparse component (for outliers), and use projection pursuit to recover both components by convex optimization.  

Similarly, PPCA is also highly sensitive to outliers since the Gaussian assumptions are used for both data and latent variables. The first robust PPCA model was proposed by \cite{archambeau2006robust}, which changes the conditional distribution of the data $\bm{x}|\bm{z}$ and the latent variable $\bm{z}$ from Gaussian to the multivariate \textit{t}-distributions. Because \textit{t}-distributions have heavier tails than normal distributions, the resulting maximum likelihood estimates will be more robust to outliers. To the best of our knowledge, all following papers working on robust PPCA used the same modification. \cite{archambeau2008mixtures} developed mixtures of robust PPCA model to work on high-dimensional nonlinear data by combining locally linear models, and this model was later applied to process monitoring by  \cite{zhu2014robust}. \cite{gai2008robust} extended this idea to robust Bayesian PCA and provided both an EM algorithm and the variational Bayes approach for implementation. 
Furthermore, \cite{chen2009robust} discussed the issue of missing data using this robust PPCA model based on multivariate \textit{t}-distributions.  

Another advantage of using \textit{t}-distributions is that they can be expressed as continuous scale mixtures of normal distributions \citep{liu1995ml}. From this, we can avoid dealing with complicated density function of multivariate \textit{t}-distributions, but instead introduce a new latent gamma distributed scale variable to obtain simpler conditional distributions. Nonetheless, the derivation of the EM algorithm is still more complicated than the standard PPCA. To simplify the derivations, \cite{archambeau2006robust} and all of the following literature used the conjugacy properties of the exponential family by introducing the same scale variable to both the data $\bm{x}|\bm{z}$ and the latent variable $\bm{z}$. However, as we will show in this paper, the hierarchical model using the same scale variable is not equivalent to the original motivated model, which means the simplification that has been used in the existing literature does not match the original distributional model. Conceptually, there is no particular reason to introduce highly dependent robustness simultaneously to both the data space and the latent space, other than for a simplification in computation. This will implicitly assume that if a data point is considered to be an outlier in the data space, it is also considered to be an outlier in the lower-dimensional latent space. However, as we show in this manuscript, the construction that has been used in the past, actually refers to different model constructions with different statistical interpretations. If we instead use other robust PPCA models which will be introduced in this paper, the conjugacy properties will not hold for all steps. As a consequence, there will be no closed-form expression for the posterior means, so Monte Carlo methods might need to be involved in the E-step. 

In the following sections, we first review the scale mixture representation of multivariate $t$-distributions and standard robust PPCA methods. Then two general types of robust PPCA methods based on multivariate $t$-distributions will be studied in detail. Furthermore, an Monte Carlo expectation-maximization (MCEM) algorithm for implementing one type of such models will be derived. Finally, two simulation studies are discussed to evaluate and compare different models.


\section{Background}
\subsection{Multivariate $t$-Distributions}
Let a $q$-dimensional random vector $\bm{x}$ follow a multivariate \textit{t}-distribution $\bm{t}_\nu(\bm{\mu},\bm{\Sigma})$, with mean vector $\bm{\mu}$, scale matrix $\bm{\Sigma}$, and degrees of freedom $\nu$. Note that the covariance matrix in this case is given by $\frac{\nu}{\nu-2}\bm{\Sigma}$ instead of $\bm{\Sigma}$ when $\nu>2$. The corresponding probability density function is \citep{kibria2006short}:
\begin{equation}
    f(\bm{x})=
\frac{\Gamma[(\nu+q) / 2]}{\Gamma(\nu / 2) \nu^{q / 2} \pi^{q / 2}|\mathbf{\Sigma}|^{1 / 2}}\left[1+\frac{1}{\nu}(\bm{x}-\boldsymbol{\mu})^{T} \mathbf{\Sigma}^{-1}(\bm{x}-\boldsymbol{\mu})\right]^{-(\nu+q) / 2}.
\end{equation}
We can easily see that this density function decays at a polynomial rate, which indicates that \textit{t}-distributions have heavier tails than normal distributions since normal densities decay exponentially. 

One way to represent a multivariate \textit{t}-distribution $\bm{t}_{\nu}(\bm{\mu},\Sigma)$ is to express it as a scale mixture of normal distributions \citep{liu1995ml}:
\begin{equation}
    \begin{gathered}
    \bm{x}|u \sim N\left(\bm{\mu},\frac{\bm{\Sigma}}{u}\right),\\
    u \sim \mathrm{Ga}\left(\frac{\nu}{2},\frac{\nu}{2}\right).
    \end{gathered}
\end{equation}
Another important property related to multivariate \textit{t}-distributions is the conjugacy between the gamma prior and the normal likelihood. Given a \textit{q}-dimensional normal likelihood $\bm{x}|u \sim N(\bm{\mu},\frac{\bm{\Sigma}}{u})$ and a gamma prior $u  \sim \mathrm{Ga}(\alpha,\beta)$, we have:
\begin{equation}
    \begin{gathered}
    u|\bm{x} \sim \mathrm{Ga}\left(\alpha+\frac{q}{2}, \beta+\frac{(\bm{x}-\boldsymbol{\mu})^{T} \mathbf{\Sigma}^{-1}(\bm{x}-\boldsymbol{\mu})}{2}\right),\\
    \bm{x} \sim t_{2\alpha}\left(\bm{\mu},\frac{\beta}{\alpha}\bm{\Sigma}\right),
    \end{gathered}
\end{equation}
where one can easily see that (2) is a special case of (3) with $\alpha=\beta=\frac{\nu}{2}$.

\subsection{Probabilistic Principal Component Analysis (PPCA)}
\cite{tipping1999probabilistic} proposed the standard probabilistic principal component analysis (PPCA) method, which is a generative Gaussian latent variable model. Firstly, a $d$-dimensional latent variable $\bm{z}$ is generated from an isotropic Gaussian distribution with zero mean and unit variance for each component:
\begin{equation}
    \bm{z} \sim N(\bm{0},\bm{I}).
\end{equation}
Then, conditioning on the ``latent score" $\bm{z}$, a $q$-dimensional ``data vector" $(q\geq d)$ is generated from another isotropic Gaussian distribution:
\begin{equation}
    \bm{x}|\bm{z} \sim N(\bm{W}\bm{z}+\bm{\mu},\sigma^2\bm{I}),
\end{equation}
where the maximum likelihood estimate of $\bm{W}$ spans the $q$-dimensional principal subspace, and the estimated posterior mean $\mathbb{E}(\bm{z}|\bm{x})$ will recover the conventional PCA as $\sigma^2 \rightarrow 0$ \citep{tipping1999probabilistic}.

\section{Robust PPCA Models}
Since the standard PPCA under the Gaussian framework is sensitive to the existence of outliers, \cite{archambeau2006robust} proposed a robust PPCA model (6) by replacing the Gaussian distributions in (4) and (5) by multivariate $t$-distributions with the same degrees of freedom:
\begin{equation}
    \begin{gathered}
        \bm{x}|\bm{z} \sim     t_{\nu}(\bm{W}\bm{z}+\bm{\mu},\sigma^2\bm{I}),\\
        \bm{z} \sim t_{\nu}(\bm{0},\bm{I}).
    \end{gathered}
\end{equation}
To simplify the derivation of the corresponding EM algorithm, \cite{archambeau2006robust} presented a hierarchical model (7) by expressing the $t$-distributions as a scale mixture of normal distributions:
\begin{equation}
    \begin{gathered}
        \bm{x}|\bm{z},u \sim N\left(\bm{W}\bm{z}+\bm{\mu},\frac{\sigma^2\bm{I}}{u}\right),\\
        \bm{z}|u \sim N\left(\bm{0},\frac{\bm{I}}{u}\right),\\
        u \sim \mathrm{Ga}\left(\frac{\nu}{2},\frac{\nu}{2}\right).
    \end{gathered}
\end{equation}
The use of a single latent scale variable $u$, results in simpler computation. However, it invokes dependence between the variance in the data space and the latent space, and after integrating out the scale variable $u$, the dependence between $\bm{x}$ and $\bm{z}$ shows up in the conditional variance as well as the conditional mean and hence it follows that (6) and (7) are not equivalent, as previously claimed. Details are shown in Section 3.1.2. 

We noted that a number of papers (\cite{archambeau2008mixtures},\cite{gai2008robust},\cite{chen2009robust} and \cite{zhu2014robust}, for example) use this equivalence and build upon it further. We will now discuss the equivalent hierarchical models for (6) and equivalent marginal model for (7).

\subsection{Model formulation}

\subsubsection{Conditional and Latent $t$-Models}

In this subsection, we will propose a general PPCA model with conditional and latent $t$-distributions. It turns out that Model (6) is a special case of this general PPCA model. We denote this general type of models (8) as ``C\&L $t$-model" for future reference. The corresponding directed acyclic graph (DAG) and model equivalence are:
\begin{center}
    \begin{tikzpicture}[nodes={draw, circle}, ->]

    \node[text centered] (u2) {$u_2$};
    \node[right = 2 of u2, text centered] (z) {$\bm{z}$};
    \node[right = 2 of z, text centered] (x) {$\bm{x}$};
    \node[above = 1.5 of z, text centered] (u1) {$u_1$};
     
    \draw[->, line width= 1] (u2) --  (z);
    \draw[->, line width= 1] (u1) --  (x);
    \draw [->, line width= 1] (z) -- (x);
    
    \end{tikzpicture}
\end{center}

\begin{equation}
    \left\{\begin{aligned} 
      &\bm{x}|\bm{z} \sim t_{\nu_1}(\bm{W}\bm{z}+\bm{\mu},\sigma^2\bm{I}), \\ 
      &\ \bm{z} \ \, \sim t_{\nu_2}(\bm{0},\bm{I}),
    \end{aligned}\right. \iff 
    \left\{\begin{aligned}
      &\bm{x}|\bm{z},u_1 \sim N\left(\bm{W}\bm{z}+\bm{\mu},\frac{\sigma^2\bm{I}}{u_1}\right),\\
    &\ \bm{z}|u_2 \ \ \sim N\left(\bm{0},\frac{\bm{I}}{u_2}\right),\\ 
    &\ \ u_1 \quad \, \sim \mathrm{Ga}\left(\frac{\nu_1}{2},\frac{\nu_1}{2}\right),\\
    &\ \ u_2 \quad \,  \sim \mathrm{Ga}\left(\frac{\nu_2}{2},\frac{\nu_2}{2}\right),
    \end{aligned}\right.
\end{equation}
where $u_1$ and $u_2$ are independent.

We now show the equivalence between the models given above. Starting with the hierarchical model on the RHS, we can directly see that $\bm{z} \sim t_{\nu_2}(\bm{0},\bm{I})$ from the Gaussian scale mixture expression for the multivariate $t$-distributions in (2):
\begin{equation}
    p(\bm{z})=\int_0^{\infty}p(\bm{z}|u_2)p(u_2)\mathrm{d}u_2=\int_0^{\infty}N\left(\bm{0},\frac{\bm{I}}{u_2}\right)\cdot \mathrm{Ga}\left(\frac{\nu_2}{2},\frac{\nu_2}{2}\right)\mathrm{d}u_2\sim t_{\nu_2}(\bm{0},\bm{I}).
\end{equation}
On the other hand, in the Gaussian scale mixture hierarchical model, $u_1$ and $u_2$ are independent and $\bm{z}$ is completely determined by $u_2$, so $\bm{z}$ and $u_1$ are also independent, i.e, $p(u_1|\bm{z})=p(u_1)$. Then the conditional distribution $p(\bm{x}|\bm{z})$ is:
\begin{align}
    p(\bm{x}|\bm{z})=\int^\infty_0 p(\bm{x}|\bm{z},u_1)p(u_1)\mathrm{d}u_1&=\int^\infty_0 N\left(\bm{W}\bm{z}+\bm{\mu},\frac{\sigma^2\bm{I}}{u_1}\right)\cdot \mathrm{Ga}\left(\frac{\nu_1}{2},\frac{\nu_1}{2} \right)\mathrm{d}u_1\nonumber\\
    &\sim t_{\nu_1}(\bm{W}\bm{z}+\bm{\mu},\sigma^2\bm{I}),
\end{align}
where the last line follows from result (3) or directly from the Gaussian scale mixture representation for multivariate $t$-distributions.

There are two interesting special cases of C\&L $t$-models. The first one is to set the degrees of freedom for both scale variables $u_1$ and $u_2$ to be equal, which leads to (6):
\begin{equation}
    \left\{\begin{aligned} 
      &\bm{x}|\bm{z} \sim t_{\nu}(\bm{W}\bm{z}+\bm{\mu},\sigma^2\bm{I}), \\ 
      &\ \bm{z} \ \, \sim t_{\nu}(\bm{0},\bm{I}),
    \end{aligned}\right. \iff 
        \left\{
    \begin{aligned}
      &\bm{x}|\bm{z},u_1 \sim N\left(\bm{W}\bm{z}+\bm{\mu},\frac{\sigma^2\bm{I}}{u_1}\right),\\
    &\ \bm{z}|u_2 \ \ \sim N\left(\bm{0},\frac{\bm{I}}{u_2}\right),\\
    &\ \ u_1 \quad \, \sim \mathrm{Ga}\left(\frac{\nu}{2},\frac{\nu}{2}\right),\\
    &\ \ u_2 \quad \, \sim \mathrm{Ga}\left(\frac{\nu}{2},\frac{\nu}{2}\right),
    \end{aligned}    
    \right.
\end{equation}
where $u_1$ and $u_2$ are independent. Then, it is clear that the correct Gaussian scale mixture expression of (6) takes a different form from (7).

Another special case is to let $\nu_2 \rightarrow \infty$, which implies that the latent variable $\bm{z}$ follows a multivariate normal distribution. Briefly speaking, this model assumes that the extremeness only arises from the data space. More interpretations will be presented in the next subsection. We denote this special case as ``Conditional $t$-model", and present its DAG and model structure:

\begin{center}
    \begin{tikzpicture}[nodes={draw, circle}, ->]

    \node[text centered] (z) {$\bm{z}$};
    \node[above = 1.5 of z, text centered] (u) {$u$};
    \node[right = 2 of z, text centered] (x) {$\bm{x}$};
    
    \draw[->, line width= 1] (u) --  (x);
    \draw [->, line width= 1] (z) -- (x);
    
    \end{tikzpicture}
\end{center}
\begin{equation}
    \left\{\begin{aligned} 
      &\bm{x}|\bm{z} \sim t_{\nu}\left(\bm{W}\bm{z}+\bm{\mu},\sigma^2\bm{I}\right), \\ 
      &\ \bm{z} \ \ \sim N(\bm{0},\bm{I}),
    \end{aligned}\right. \iff 
    \left\{\begin{aligned}
    &\bm{x}|\bm{z},u \sim N\left(\bm{W}\bm{z}+\bm{\mu},\frac{\sigma^2\bm{I}}{u}\right),\\
    &\ \ \bm{z} \quad \, \sim N\left(\bm{0},\bm{I}\right),\\
    &\ \ u \quad \, \sim \mathrm{Ga}\left(\frac{\nu}{2},\frac{\nu}{2}\right).
    \end{aligned}\right.
\end{equation}

\subsubsection{Marginal $t$-Models}
Next, we show that the following model is equivalent to (7), and we denote it as ``Marginal $t$-model". The corresponding DAG and model equivalence are:

\begin{center}
    \begin{tikzpicture}[nodes={draw, circle}, ->]

    \node[text centered] (u) {$u$};
    \node[right = 2 of u, text centered] (z) {$\bm{z}$};
    \node[right = 2 of z, text centered] (x) {$\bm{x}$};
     
    \draw[->, line width= 1] (u) --  (z);
    \draw [->, line width= 1] (z) -- (x);
    \draw[->, line width=1] (u) to  [out=270,in=270,     looseness=0.6]  (x);
    \end{tikzpicture}
\end{center}
\begin{equation}
    \left\{\begin{aligned} 
      &\bm{x}|\bm{z} \sim t_{\nu+d}\left(\bm{W}\bm{z}+\bm{\mu},\frac{\nu+\bm{z}^T\bm{z}}{\nu+d}\sigma^2\bm{I}\right), \\ 
      &\ \bm{z} \ \, \sim t_{\nu}(\bm{0},\bm{I}),
    \end{aligned}\right. \iff 
    \left\{\begin{aligned}
    &\bm{x}|\bm{z},u \sim N\left(\bm{W}\bm{z}+\bm{\mu},\frac{\sigma^2\bm{I}}{u}\right),\\
    &\ \bm{z}|u \ \ \sim N\left(\bm{0},\frac{\bm{I}}{u}\right),\\
    &\ \ u \quad \,\sim \mathrm{Ga}\left(\frac{\nu}{2},\frac{\nu}{2}\right),
    \end{aligned}\right.
\end{equation}
where the RHS is what has been used in previous literature as the hierarchical model. Note that this is not equivalent to (6), it is instead equivalent to the LHS, where the latent variable $\bm{z}$ appears not just in the conditional mean but in the conditional variance as well.   

The derivation is essentially the same as that for the C\&L $t$-models. From the Gaussian scale mixture expression for multivariate $t$-distributions in (2), we can directly see that $\bm{z} \sim t_{\nu}(\bm{0},\bm{I})$:
\begin{equation}
    p(\bm{z})=\int_0^{\infty}p(\bm{z}|u)p(u)\mathrm{d}u=\int_0^{\infty}N\left(\bm{0},\frac{\bm{I}}{u}\right)\cdot \mathrm{Ga}\left(\frac{\nu}{2},\frac{\nu}{2}\right)\mathrm{d}u\sim t_{\nu}(\bm{0},\bm{I}).
\end{equation}
On the other hand, to determine the conditional distribution $p(\bm{x}|\bm{z})$, we need find the conditional distribution $p(u|\bm{z})$ using the normal-gamma conjugacy property (3):
\begin{align}
    p(u|\bm{z}) \propto p(\bm{z}|u)p(u)&= N\left(\bm{0},\frac{\bm{I}}{u}\right)\cdot \mathrm{Ga}\left(\frac{\nu}{2},\frac{\nu}{2}\right)\sim \mathrm{Ga}\left(\frac{\nu+d}{2},\frac{\nu+\bm{z}^T\bm{z}}{2} \right).
\end{align}
Then, the conditional distribution $p(\bm{x}|\bm{z})$ is:
\begin{align}
    p(\bm{x}|\bm{z})=\int^\infty_0 p(\bm{x}|\bm{z},u)p(u|\bm{z})\mathrm{d}u&=\int^\infty_0 N\left(\bm{W}\bm{z}+\bm{\mu},\frac{\sigma^2\bm{I}}{u}\right)\cdot \mathrm{Ga}\left(\frac{\nu+d}{2},\frac{\nu+\bm{z}^T\bm{z}}{2} \right)\mathrm{d}u\nonumber \\
    & \sim t_{\nu+d}\left(\bm{W}\bm{z}+\bm{\mu},\frac{\nu+\bm{z}^T\bm{z}}{\nu+d}\sigma^2\bm{I}\right),
\end{align}
where the last line follows from result (3) or directly from the Gaussian scale mixture expression for multivariate $t$-distributions.

We call this a Marginal $t$-model, as it can be shown that the marginal distribution of $\bm{x}$ is actually $t$-distributed in this case:
\begin{equation}
    \bm{x} \sim t_{\nu}(\bm{\mu},\bm{W}\bm{W}^T+\sigma^2\bm{I}).
\end{equation}
To show this, first we notice that the conditional distribution $p(\bm{x}|u)$ is also a normal distribution since normal prior for the mean parameter conjugates to a normal likelihood:
\begin{align}
    p(\bm{x}|u)=\int p(\bm{x}|\bm{z},u)p(\bm{z}|u)\mathrm{d}\bm{z}&=\int N\left(\bm{W}\bm{z}+\bm{\mu},\frac{\sigma^2\bm{I}}{u}\right)\cdot N\left(\bm{0},\frac{\bm{I}}{u} \right)\mathrm{d}\bm{z}\nonumber\\
    & \sim N\left(\bm{\mu},\frac{\bm{W}\bm{W}^T+\sigma^2\bm{I}}{u}\right).
\end{align}
Then using the Gaussian scale mixture expression of multivariate $t$-distributions (3), we show that the data vector $\bm{x}$ is multivariate $t$-distributed:
\begin{align}
    p(\bm{x})=\int_0^{\infty}p(\bm{x}|u)p(u)\mathrm{d}u&=\int_0^{\infty}N\left(\bm{\mu},\frac{\bm{W}\bm{W}^T+\sigma^2\bm{I}}{u}\right)\cdot \mathrm{Ga}\left(\frac{\nu}{2},\frac{\nu}{2}\right)\mathrm{d}u \nonumber\\
    & \sim t_{\nu}(\bm{\mu},\bm{W}\bm{W}^T+\sigma^2\bm{I}).
\end{align}

\subsection{Model Comparison}
In this section, we will discuss some connections between C\&L $t$-models (8) and Marginal $t$-models (13) and also make comparisons both in theory and simulations.

\subsubsection{Conditional and Latent $t$-Models}

From the DAGs, it is easy to see that the fundamental differences between these models are the scale variable $u$. For Marginal $t$-models, the same scale variable $u$ will determine both distributions $\bm{x}|\bm{z}$ and $\bm{z}$. However, no such dependence exists in C\&L $t$-models, where $u_1$ and $u_2$ separately contribute to the variance component of $\bm{x}|\bm{z}$ and $\bm{z}$. 

As we can see from the last section, C\&L $t$-models
are indeed the models that past papers motivated since they change the conditional distribution of the data vector $p(\bm{x}|\bm{z})$ and marginal distribution of latent variable $\bm{z}$ from multivariate Gaussian distributions to multivariate $t$-distributions. In this case, the marginal distribution of the data vector $\bm{x}$ is also not a multivariate $t$-distribution. To show this, we first notice that the conditional distribution $p(\bm{x}|u)$ is also a normal distribution since normal prior for the mean parameter conjugates to a normal likelihood:
\begin{align}
    p(\bm{x}|u_1,u_2)=\int p(\bm{x}|\bm{z},u_1)p(\bm{z}|u_2)\mathrm{d}\bm{z}&=\int N\left(\bm{W}\bm{z}+\bm{\mu},\frac{\sigma^2\bm{I}}{u_1}\right)\cdot N\left(\bm{0},\frac{\bm{I}}{u_2} \right)\mathrm{d}\bm{z}\nonumber\\
    & \sim N\left(\bm{\mu},\frac{\bm{W}\bm{W}^T}{u_2}+\frac{\sigma^2\bm{I}}{u_1}\right).
\end{align}
Then we can determine the probability density function of $\bm{x}$ by integrating out the scale variables $u_1$ and $u_2$ in the joint probability density function: $p(\bm{x},u_1,u_2)$:
\begin{align}
    p(\bm{x})&=\int_0^{\infty}\int_0^{\infty}p(\bm{x}|u_1,u_2)p(u_1)p(u_2)\mathrm{d}u_1\mathrm{d}u_2\nonumber\\
    &=\int_0^{\infty}\int_0^{\infty}N\left(\bm{\mu},\frac{\bm{W}\bm{W}^T}{u_2}+\frac{\sigma^2\bm{I}}{u_1}\right)\cdot \mathrm{Ga}\left(\frac{\nu}{2},\frac{\nu}{2}\right)\cdot \mathrm{Ga}\left(\frac{\nu}{2},\frac{\nu}{2}\right)\mathrm{d}u_1\mathrm{d}u_2,
\end{align}
which cannot be recognized as the probability density function of any multivariate $t$-distributions.

This result is intuitive and can be explained using the generated data. To illustrate how a C\&L $t$-model (8) fits a data set, we generate 10,000 data points from the corresponding generative process:
\begin{equation}
    \left\{\begin{aligned}
      &\bm{x}|\bm{z},u_1 \sim N\left(\bm{W}\bm{z}+\bm{\mu},\frac{\sigma^2\bm{I}}{u_1}\right),\\
    &\ \bm{z}|u_2 \ \ \sim N\left(\bm{0},\frac{\bm{I}}{u_2}\right),\\
    &\ \ u_1 \quad \, \sim \mathrm{Ga}\left(\frac{\nu}{2},\frac{\nu}{2}\right),\\
    &\ \ u_2 \quad \, \sim \mathrm{Ga}\left(\frac{\nu}{2},\frac{\nu}{2}\right).
    \end{aligned}\right.
\end{equation}
In this example, we let the degrees of freedom $\nu_1=\nu_2=3$, the dimension of $\bm{x}$ be $q=2$ and the dimension of $\bm{z}$ be $d=1$, with the transformation matrix $\bm{W}=(2,1)^T$. The scale parameter of the isotropic errors $\sigma^2$ will be set to different values (0.5 and 2), to compare the effects to the overall data structure. To avoid the overall graphical structure to be dominated by a few extreme outliers, 
Figure 1 only shows the region with both coordinates inside $[-100,100]$. 

\begin{figure}[t]
\centering
\includegraphics[width=12cm]{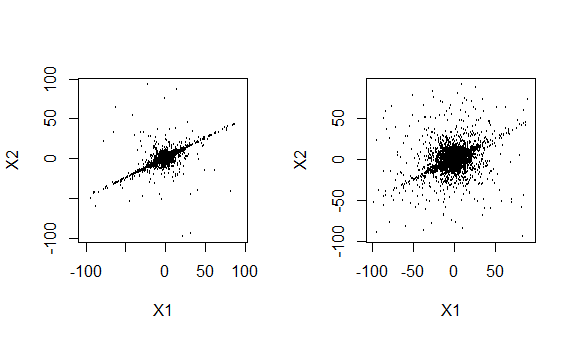}
\caption{Sample data generations from C\&L $t$-models (Left: $\sigma^2=0.5$; Right: $\sigma^2=2$)}
\end{figure}

As we can see from Figure 1, most of the outliers will be only extreme in either the principal axes or the direction orthogonal to the principal axes. It is because most of the latent vectors $\bm{z}$ (scalars in this example) are concentrated in the center of the latent space, and the isotropic noises $\bm{\varepsilon}$ are independent of $\bm{z}$. Therefore, the probability that both the data variable and latent variable take extreme values in both spaces is extremely small (it can only happen when $u_1, u_2$ are simultaneously large enough). This also explains why the marginal distribution of the data $\bm{x}$ will not be elliptical.  

Next, Figure 2 shows 10,000 data points from the generative process of Conditional $t$-models (12), under the same setting as above:
\begin{equation}
    \left\{\begin{aligned}
    &\bm{x}|\bm{z},u \sim N\left(\bm{W}\bm{z}+\bm{\mu},\frac{\sigma^2\bm{I}}{u}\right),\\
    &\ \ \bm{z} \quad \, \sim N\left(\bm{0},\bm{I}\right),\\
    &\ \ u \quad \, \sim \mathrm{Ga}\left(\frac{\nu}{2},\frac{\nu}{2}\right).
    \end{aligned}\right.
\end{equation}

\begin{figure}[t]
\centering
\includegraphics[width=12cm]{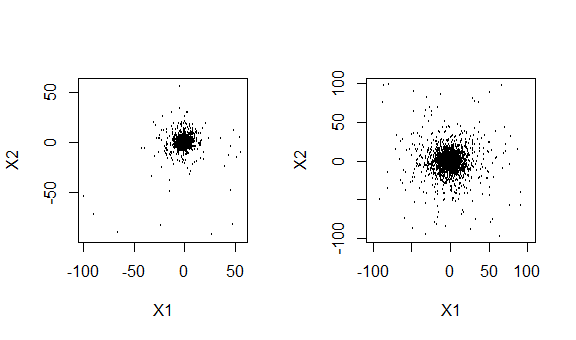}
\caption{Sample data generation from Conditional $t$-models (Left: $\sigma^2=0.5$; Right: $\sigma^2=2$)}
\end{figure}
Intuitively speaking, Conditional $t$-models only assume that the existence of outliers comes from the data space rather than latent space. As a consequence, we can see from Figure 2 that the marginal distribution of the data is more spherical than before. This observation is reasonable from the generative process and distributional assumptions. Using the similar intuitive explanation from \cite{bishop2006pattern}, we can think of the distribution $p(\bm{x})$ as being defined by taking an isotropic multivariate $t$-distributed ``spray can" and moving it across the principal subspace (in our example, one-dimensional principal axis) spraying
spherical ink with density determined by $p(\bm{x}|\bm{z})\sim t_{\nu}(\bm{Wz}+\bm{\mu},\sigma^2\bm{I})$ and weighted by the prior distribution $p(\bm{z})\sim N(\bm{0},\bm{I})$. The accumulated ink density gives rise to a distribution representing the marginal density $p(\bm{x})$. The normal assumption of $\bm{z}$ means that most of latent data will concentrate in the center, and we can only move the ``spray can" within a very small region along the principal subspace. However, the generative distribution $p(\bm{x}|\bm{z})$ is also spherical with center $\bm{\mu}$, but relatively widespread due to the fat tail property of $t$-distributions, the overall shape of the marginal density $p(\bm{x})$ will be dominated by $p(\bm{x}|\bm{z})$, which is spherical. 

\subsubsection{Marginal $t$-Distributed Models}

As we can see from Section 3.1.2, the resulting conditional distribution $p(\bm{x}|\bm{z})$ for Marginal $t$-models is different from that in C\&L $t$-models: $t_v(\bm{W}\bm{z}+\bm{\mu},\sigma^2\bm{I})$. In fact, the conditional distribution $p(\bm{x}|\bm{z})$ for C\&L $t$-models can be obtained by incorrectly integrating out the marginal distribution $p(u)$ instead of $p(u|\bm{z})$:
\begin{align}
    t_v(\bm{W}\bm{z}+\bm{\mu},\sigma^2\bm{I})&=\int^\infty_0 p(\bm{x}|\bm{z},u)p(u)\mathrm{d}u=\int^\infty_0 N\left(\bm{W}\bm{z}+\bm{\mu},\frac{\sigma^2\bm{I}}{u}\right)\cdot \mathrm{Ga}\left(\frac{\nu}{2},\frac{\nu}{2} \right)\mathrm{d}u.
\end{align}
Compared to $t_v(\bm{W}\bm{z}+\bm{\mu},\sigma^2I)$, the correct conditional distribution of $\bm{x}|\bm{z}$ has a larger degrees of freedom $\nu+d$, and a scale matrix $\frac{\nu+\bm{z}^T\bm{z}}{\nu+d}\sigma^2\bm{I}$ involving the observed value of latent variable $\bm{z}$. Interestingly, noticing that $\bm{z}\sim t_{\nu}(\bm{0},\bm{I})$, the correct expected value of the scale matrix $\frac{\nu+\bm{z}^T\bm{z}}{\nu+d}\sigma^2\bm{I}$ converges to $\sigma^2\bm{I}$ as the degrees of freedom $\nu \rightarrow \infty$:
\begin{equation}
    \mathbb{E}\left[\frac{\nu+\bm{z}^T\bm{z}}{\nu+d}\sigma^2\bm{I} \right]=\frac{\nu+\frac{\nu}{\nu-2}d}{\nu+d}\sigma^2\bm{I}\rightarrow \sigma^2\bm{I},
\end{equation}
as $\mathbb{E}(\bm{z}^T\bm{z})=\mathrm{tr}\left(\frac{\nu}{\nu-2}\bm{I}\right)+\bm{\mu}_z^T\bm{\mu}_z= \frac{\nu}{\nu-2}d \rightarrow d$.

Another way to look at Marginal $t$-models is to express $\bm{x}$ under a generative process. After we determined a $d$-dimensional latent variable, a $q$-dimensional data variable will be generated from an affine transformation of $\bm{z}$ plus a $q$-dimensional isotropic Gaussian noise $\bm{\varepsilon}$. The noise vector $\bm{\varepsilon}$ and latent variable vector $\bm{z}$ are jointly multivariate $t$-distributed. Since $(\bm{x},\bm{z})^T$ is an affine transformation of $(\bm{\varepsilon},\bm{z})^T$, $\bm{x}$ and $\bm{z}$ are also jointly multivariate $t$-distributed:
\begin{equation*}
    \bm{x}=\bm{W}\bm{z}+\bm{\mu}+\bm{\varepsilon},
\end{equation*}
\begin{equation}
    \left[\begin{array}{l}{\bm{\varepsilon}} \\ {\bm{z}}\end{array}\right] \sim t_{\nu}\left(\left[\begin{array}{l}{\bm{0}} \\ {\bm{0}}\end{array}\right],\left[\begin{array}{cc}{\sigma^2\bm{I}} & {\bm{0}} \\ {\bm{0}} & {\bm{I}}\end{array}\right]\right),
\end{equation}
\begin{equation*}
    \left[\begin{array}{l}{\bm{x}} \\ {\bm{z}}\end{array}\right] \sim t_{\nu}\left(\left[\begin{array}{l}{\bm{\mu}} \\ {\bm{0}}\end{array}\right],\left[\begin{array}{cc}{\bm{W}\bm{W}^T+\sigma^2\bm{I}} & {\bm{W}} \\ {\bm{W}^T} & {\bm{I}}\end{array}\right]\right).
\end{equation*}
In the standard PPCA models (6), the noise $\bm{\varepsilon}$ and latent variable $\bm{z}$ will be jointly Gaussian distributed. Since uncorrelation implies independence for joint Gaussian random variables, we know that the higher-dimensional noise will be generated independently from the value $\bm{z}$ we observed in the latent space. However, uncorrelation in joint $t$-distributions does not imply independence. To see this, assume a $q$-dimensional random vector $\bm{x}=(\bm{x}_1,\bm{x}_2)^T$ is $t$-distributed and $\bm{x}_1$ and $\bm{x}_2$ are uncorrelated, the kernel of the joint density can be written as:
\begin{align}
    &\left\{1+\frac{1}{\nu}\left(\left[\begin{array}{l}
    {\bm{x}_{1}} \\
    {\bm{x}_{2}}
    \end{array}\right]-\left[\begin{array}{l}
    {\bm{\mu}_{1}} \\
    {\bm{\mu}_{2}}
    \end{array}\right]\right)^{T}
    \left[\begin{array}{cc}{\bm{\Sigma}_{1 1}} & {\bm{0}} \\
    {\bm{0}} & {\bm{\Sigma}_{2 2}}
    \end{array}\right]^{-1}
    \left(\left[\begin{array}{l}
    {\bm{x}_{1}} \\
    {\bm{x}_{2}}
    \end{array}\right]-\left[\begin{array}{l}
    {\bm{\mu}_{1}} \\
    {\bm{\mu}_{2}}
    \end{array}\right]\right)\right\}^{-\frac{\nu+q}{2}}\nonumber\\
    &=\left\{1+\frac{1}{\nu}\left[(\bm{x}_1-\bm{\mu}_1)^T\bm{\Sigma}_{11}^{-1}(\bm{x}_1-\bm{\mu}_1)+(\bm{x}_2-\bm{\mu}_2)^T\bm{\Sigma}_{22}^{-1}(\bm{x}_2-\bm{\mu}_2) \right]\right\}^{-\frac{\nu+q}{2}},
\end{align}
which cannot be factorized into two products of densities of $t$-distributions. Therefore, for Marginal $t$-models, the latent variable $\bm{z}$ and the noise $\bm{\epsilon}$ in data space are uncorrelated, but dependent.
\begin{figure}[t]
\centering
\includegraphics[width=12cm]{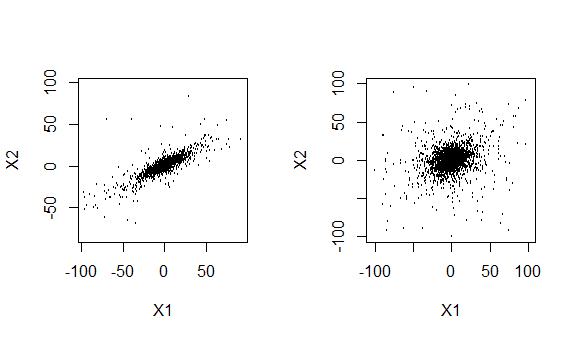}
\caption{Sample data generation from Marginal $t$-models (Left: $\sigma^2=0.5$; Right: $\sigma^2=2$)}
\end{figure}

To show how a Marginal $t$-model (13) fits a data set, we generate 10,000 data points from the corresponding generative process, under the same settings as before:
\begin{equation}
    \left\{\begin{aligned}
    &\bm{x}|\bm{z},u \sim N\left(\bm{W}\bm{z}+\bm{\mu},\frac{\sigma^2\bm{I}}{u}\right),\\
    &\ \bm{z}|u \ \ \sim N\left(\bm{0},\frac{\bm{I}}{u}\right),\\
    &\ \ u \quad \, \sim \mathrm{Ga}\left(\frac{\nu}{2},\frac{\nu}{2}\right).
    \end{aligned}\right.
\end{equation}
The discussion, along with Figure 3, provides insight about how Marginal $t$-models capture the outliers. Because the data variable $\bm{x}$ and latent variable $\bm{z}$ share the common scale variable $u$, they are strongly dependent on each other. When $u$ takes small values, both $\bm{x}|\bm{z},u \sim N(\bm{W}\bm{z}+\bm{\mu},\frac{\sigma^2\bm{I}}{u})$ and $\bm{z}|u \sim N(\bm{0},\frac{\bm{I}}{u})$ will simultaneously have large a variance. Statistically speaking, when the lower-dimensional latent variable is likely to be an outlier in the latent space, the corresponding observation is also likely to be an outlier in the data space. Furthermore, it is obvious that the marginal distributions are both elliptical given different variances of noise, which is consistent with the fact that the marginal distribution of $\bm{x}$ is a multivariate $t$-distribution.

\section{An MCEM Algorithm for Conditional and Latent $t$-Models}
An EM algorithm for Marginal $t$-models has been proposed by \cite{archambeau2006robust} so we will not repeat it in this paper. In this section, we will present an Monte Carlo expectation-maximization algorithm (MCEM) for C\&L $t$-models, where all the considered conditional expectations no longer have closed forms as they in Marginal $t$-models. Therefore, the standard EM algorithms will not work in this case, instead, we can use Monte Carlo methods in the E-step to tackle this issue. The key idea is to first jointly sample $(u_1,u_2,\bm{z})$ conditioning on $\bm{x}$ for each $n=1 \cdots N$, and then compute each expectation using its Monte Carlo estimate. Recall the probabilistic structure of C\&L $t$-models (8):
\begin{equation}
    \left\{\begin{aligned}
      &\bm{x}|\bm{z},u_1 \sim N\left(\bm{W}\bm{z}+\bm{\mu},\frac{\sigma^2\bm{I}}{u_1}\right),\\
    &\ \bm{z}|u_2 \ \ \sim N\left(\bm{0},\frac{\bm{I}}{u_2}\right),\\
    &\ \ u_1 \quad \, \sim \mathrm{Ga}\left(\frac{\nu_1}{2},\frac{\nu_1}{2}\right),\\
    &\ \ u_2 \quad \, \sim \mathrm{Ga}\left(\frac{\nu_2}{2},\frac{\nu_2}{2}\right),
    \end{aligned}\right.
\end{equation}
where the parameters are $(\bm{W},\bm{\mu},\sigma^2,\nu_1,\nu_2)$. Since the marginal distribution of the data $\bm{x}$ is not any recognized distribution, we use an EM algorithm to find the maximum likelihood estimates. The latent variable $\bm{z}$ and scale variable $u_1$, $u_2$ are all treated as ``missing data", and the complete data will comprise those and the observed data $\bm{x}$. The complete log-likelihood is:
\begin{equation}
     L_c = \sum_{n=1}^N \log [p(\bm{x}_n, \bm{z}_n, u_{1n}, u_{2n})]=\sum_{n=1}^N \log[p(\bm{x}_n|\bm{z}_n,u_{1n})p(\bm{z}_n|u_{2n})p(u_{1n})p(u_{2n})].
\end{equation}
where (30) can be directly deduced by the direct acyclic graph of C\&L $t$-models. 

In the E-step, we need to find the conditional expectation of the complete log-likelihood $\langle L_c \rangle$ conditioning on the observed data $\bm{x}_n$'s. Substituting the expressions for $p(\bm{x}_n|\bm{z}_n, u_{1n}),\ p(\bm{z}_n|u_{2n}),\ p(u_{1n})\ \mathrm{and}\ p(u_{2n})$ into (30), we obtain $\langle L_c \rangle$:
\begin{align}
    \langle L_c \rangle =& -\sum_{n=1}^N \bigg [\frac{q}{2}\log \sigma^2 +\frac{\langle u_{1n}\rangle}{2\sigma^2}(\bm{x}_n-\bm{\mu})^T(\bm{x}_n-\bm{\mu})-\frac{1}{\sigma^2}\langle u_{1n}\bm{z}_n\rangle^T\bm{W}^T(\bm{x}_n-\bm{\mu})
    \nonumber \\
    &+\frac{1}{2\sigma^2}\mathrm{tr}(\bm{W}^T\bm{W}\langle u_{1n}\bm{z}_n\bm{z}_n^T\rangle)-\frac{\nu_1}{2} \left(\log \frac{\nu_1}{2}+\langle \log u_{1n}\rangle-\langle u_{1n}\rangle \right)+\log \Gamma \left(\frac{\nu_1}{2} \right) \nonumber \\
    &- \frac{\nu_2}{2}\left(\log \frac{\nu_2}{2}+\langle \log u_{2n}\rangle-\langle u_{2n}\rangle \right)+\log \Gamma \left(\frac{\nu_2}{2}\right) \bigg]+\mathrm{constant}.
\end{align}
Since all the expectation terms have no closed forms, we instead sample $B$ times from $(u_1,u_2,\bm{z})|\bm{x}$ and compute the Monte Carlo estimates. Here, we propose an approach using the Gibbs sampler, since each conditional distribution can be easily determined. To run the Gibbs sampler, we first give an initial value $(u_1^{(0)},u_2^{(0)},\bm{z}^{(0)})$, and cycle through the following loop. Assume the current $k^{\text{th}}$ iteration is $(u_1^{(k)},u_2^{(k)},\bm{z}^{(k)})$, then we can sample the next iteration from the following distributions:
\begin{align}
    u_1^{(k+1)}|u_2^{(k)},\bm{z}^{(k)},\bm{x} &\sim \mathrm{Ga} \left(\frac{\nu_1+q}{2},\frac{\nu_1+||\bm{x}-\bm{W}\bm{z}^{(k)}-\bm{\mu}||^2/\sigma^2}{2}  \right),\\
    u_2^{(k+1)}|u_1^{(k+1)},\bm{z}^{(k)},\bm{x} &\sim \mathrm{Ga} \left(\frac{\nu_2+d}{2},\frac{\nu_2+||\bm{z}^{(k)}||^2}{2}  \right),\\
    \bm{z}^{(k+1)}|u_1^{(k+1)},u_2^{(k+1)},\bm{x} &\sim N\left(\bm{M}^{-1}_{(k+1)}\bm{W}^T(\bm{x}-\bm{\mu}),\frac{\sigma^2}{u_1^{(k+1)}}\bm{M}^{-1}_{(k+1)}\right),
\end{align}
where $\bm{M}_{(k+1)}=\bm{W}^T\bm{W}+\frac{\sigma^2u_2^{(k+1)}}{u_1^{(k+1)}}\bm{I}$. Expression (34) can be derived from $\bm{x}|\bm{z},u_1,u_2 \sim N(\bm{Wz}+\bm{\mu},\frac{\sigma^2}{u_1}\bm{I})$ and $\bm{z}|u_1,u_2 \sim N(0,\frac{\bm{I}}{u_2})$, using the Bayes' rule for multivariate normal distributions \citep{bishop2006pattern}. The detailed derivations for (32) and (33) can be found in the Appendix. Then, the Monte Carlo estimates for the $n$th data point can be computed as follows:
\begin{align}
    \langle u_{in} \rangle \approx &\frac{1}{B}\sum_{k=1}^{B} u_{in}^{(k)} \quad (i=1,2),\\
    \langle \log u_{in} \rangle \approx & \frac{1}{B} \sum_{k=1}^{B} \log u_{in}^{(k)}  \quad  (i=1,2),\\
    \langle u_{1n}\bm{z}_n \rangle \approx & \frac{1}{B} \sum_{k=1}^{B} u_{1n}^{(k)}\bm{z}_n^{(k)},\\
    \langle u_{1n}\bm{z}_n\bm{z}_n^T \rangle \approx& \frac{1}{B} \sum_{k=1}^{B} u_{1n}^{(k)}\bm{z}_n^{(k)}\bm{z}_n^{(k)T}.
\end{align}


In the M-step, we maximize $\langle L_c \rangle$ with respect to $(\bm{W},\bm{\mu},\sigma^2,\nu_1,\nu_2)$, by setting all first order partial derivatives to 0. This leads to the following updating equations:
\begin{align}
    \Tilde{\bm{\mu}}=&\frac{\sum_{n=1}^{N}\left(\langle u_{1n}\rangle\bm{x}_n-\bm{W}\langle u_n \bm{z}_n \rangle \right)}{\sum_{n=1}^{N}\langle u_{1n} \rangle},\\
    \widetilde{\bm{W}}=& \left[\sum_{n=1}^{N}(\bm{x}_n-\Tilde{\bm{\mu}})\langle u_{1n}\bm{z}_n \rangle^T \right]\left[\sum_{n=1}^{N}\langle u_{1n}\bm{z}_n \bm{z}_n^T \rangle \right]^{-1},\\
    \widetilde{\sigma^2}=&\frac{1}{Nq}\sum_{n=1}^{N}\Big[\langle u_{1n}\rangle (\bm{x}_n-\Tilde{\bm{\mu}})^T (\bm{x}_n-\Tilde{\bm{\mu}})-2\langle u_{1n}\bm{z}_n \rangle^T\widetilde{\bm{W}}^T (\bm{x}_n-\Tilde{\bm{\mu}})\nonumber \\
    &+\mathrm{tr}(\widetilde{\bm{W}}^T\widetilde{\bm{W}}\langle u_{1n}\bm{z}_n\bm{z}_n^T\rangle) \Big].
\end{align}
One detail worth noting is that in (40) we update $\bm{W}$ using $\Tilde{\bm{\mu}}$ instead of $\bm{\mu}$, which means this EM algorithm is actually an expectation conditional maximization algorithm (ECM). \cite{dempster1977maximum} have shown that the ECM algorithms belong to the generalised EM algorithms (GEM) and share the same convergence properties as the standard EM algorithms. 

The maximum likelihood estimates of $\nu_1$ and $\nu_2$ can be found by solving the following equations using an one-dimensional linear search:
\begin{align}
    &1+\log \frac{\nu_1}{2}-\psi (\frac{\nu_1}{2})+\frac{1}{N}\sum_{n=1}^N\left(\langle \log u_{1n} \rangle -\langle u_{1n} \rangle \right)=0,\\
    &1+\log \frac{\nu_2}{2}-\psi (\frac{\nu_2}{2})+\frac{1}{N}\sum_{n=1}^N\left(\langle \log u_{2n} \rangle -\langle u_{2n} \rangle \right)=0.
\end{align}
where $\psi(\cdot)$ denotes the digamma function.

The main steps of this algorithm can be summarised as follows:
\begin{itemize}
    \item[] Step 1: Given the current estimates of the parameters $(\bm{W},\bm{\mu},\sigma^2,\nu_1,\nu_2)$, using the Gibbs sampler to take $B$ samples of $(u_1,u_2,\bm{z}|\bm{x})$ .
    \item[] Step 2: Compute the Monte Carlo estimates of the posterior conditional expectations in the E-step.
    \item[] Step 3: Update parameters in the M-step.
    \item[] Step 4: Repeat Step 1-3 until convergence.
\end{itemize}

\section{Simulation Studies}
In this section, we conduct two sets of numerical experiments, one two-dimensional and one 20-dimensional, to compare the performance of C\&L $t$-models, Marginal $t$-models and standard PPCA models. Our goal is to use C\&L $t$-models and Marginal $t$-models to recover the principal axes of the ``true data" when outliers exist. To compare how well the principal subspaces fitted by different PPCA models recover the true principal subspaces, we use the first principal angle between subspaces as the metric. Mathematically, given two subspaces $\mathcal {U},\mathcal {W}$, the first principal angle between $\mathcal {U}$ and $\mathcal {W}$ is defined as \citep{strang1993introduction}:
\begin{equation}
    \theta:=\min \left\{\arccos \left(\left.{\frac {|\langle u,w\rangle |}{\|u\|\|w\|}}\right)\,\right|\,u\in {\mathcal {U}},w\in {\mathcal {W}}\right\}.
\end{equation}
The range of $\theta$ is between $[0,\pi/2]$, where $\theta=\pi/2$ implies the orthogonality between the two subspaces. When both $\mathcal {U}$ and $\mathcal {W}$ have only dimension one, the first principal angle reduces to the ordinary angle between two vectors. 

\begin{figure}[t]
\centering
\includegraphics[width=14cm]{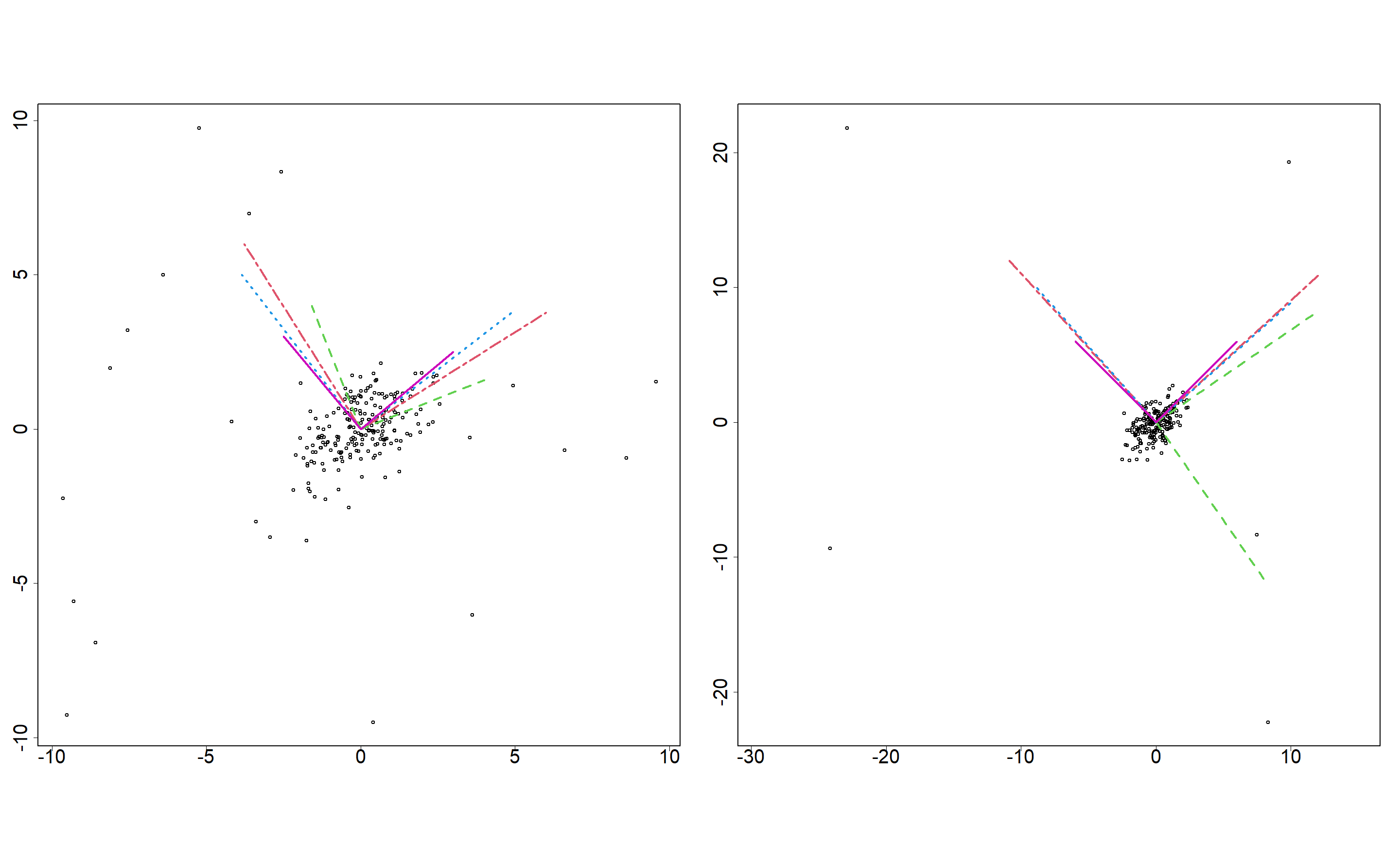}
\caption{One illustration of two-dimensional experiments 2A\&2B: principal axes fitted by the C\&L $t$-model (dark blue, dotted), the Marginal $t$-model (red, twodash), and the standard PPCA model (green, dashed). True principal axes (purple, solid). (Left: 2A: 20 outliers $\overset{iid}{\sim} U[-10,10]^2$; Right: 2B: 5 outliers $\overset{iid}{\sim} U[-25,25]^2$)}
\end{figure}

\begin{figure}[t]
\centering
\includegraphics[width=14cm]{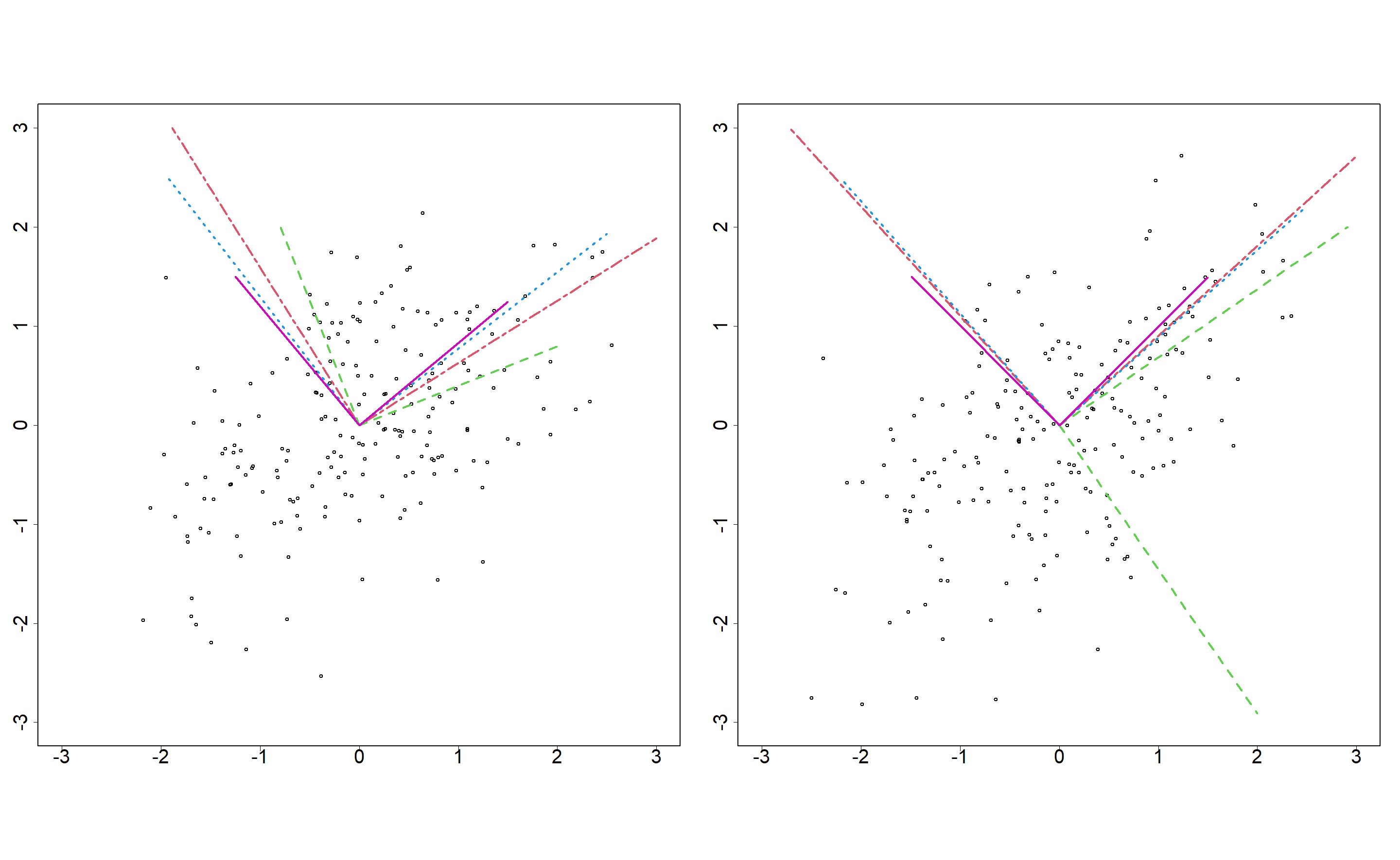}
\caption{The zoom in view of Figure 4 (only shows the region of $[-3,3]^2$)}
\end{figure}
First, we conduct the two-dimensional experiments. In each simulation we simulate $200$ random samples from a two-dimensional normal distribution ($D=2$) with variance $\sigma^2=1$ for each component and correlation coefficient $\rho=0.5$, and the principal axes of this group of data is treated as the ``true data" structure. Then, we add outliers in two different ways in two experiments and run simulations for both. In the first experiment, we generate 20 outliers where each one follows a uniform distribution $[-10,10]^2$, and we denote it as Experiment 2A. The magnitude of the outliers is moderate, which might represent that the outliers are just extreme values compared to other normal points, but still on a reasonable scale. The second approach is to add only 5 outliers where each one is generated from a uniform distribution $[-25,25]^2$, and we denote it as Experiment 2B. This magnitude is extremely large compared to the variance of the normal data, which might mean that the outliers appear due to mismeasurement. For each of Experiment 2A and 2B, we run 100 simulations, and the expected value and standard error (in the subscripts) of the first principal angle between the true principal axe and the fitted ones from different models are shown in Table 1. Notice that we only evaluate the first principal axe ($d=1$) here as the dimension of data $D=2$. 
\begin{table}[h]
\centering
\begin{tabular}{|c|c|c|c|}
\hline
First principal angle & Standard PPCA & Marginal $t$-model & C\&L $t$-model  \\
\hline
Experiment 2A & $0.529_{(0.046)}$  & $0.037_{(0.003)}$  & $0.058_{(0.016)}$   \\
\hline
Experiment 2B & $0.725_{(0.051)}$  & $0.024_{(0.002)}$  & $0.036_{(0.003)}$   \\
\hline
\end{tabular}
\caption{The average first principal angles and standard errors in Experiment 2A and 2B}
\end{table}

For the two-dimensional experiments 2A and 2B, we can visualize one of the simulations to give an illustration. Figure 4 shows the data structure of the selected two experiments and the estimated principal axes using both the C\&L $t$-model (dark blue dotted lines) and the Marginal $t$-model (red two-dashed lines), and uses green dashed lines to represent the principal axes fitted by the standard PPCA model. The shortest purple solid lines indicate the true principal components, which can be easily derived from eigendecomposition of the sample covariance of the data excluding the outliers. Figure 5 is a zoom in view of Figure 4, showing only the region of $[-3,3]^2$, which is focusing on the fit of ``true data". In addition, the estimation results corresponding to Figure 4 and 5 are shown in Table 2 and 3. 

\begin{table}[h]
\centering
\begin{tabular}{|c|c|c|c|c|}
\hline
Coordinate & True & Standard PPCA & Marginal $t$-model & C\&L $t$-model  \\
\hline
$x_1$ & 0.769  & 0.929 & 0.846 & 0.801   \\
$x_2$ & 0.639  & 0.369 & 0.533 & 0.599  \\  
\hline
Principal angle $\theta$ & N/A  & 0.316 & 0.132 & 0.051  \\
\hline
\end{tabular}
\caption{True and fitted first principal axes for the selected example of Experiment 2A (Figure 4\&5: Left)}
\end{table}

\begin{table}[h]
\centering
\begin{tabular}{|c|c|c|c|c|}
\hline
Coordinate & True & Standard PPCA & Marginal $t$-model & C\&L $t$-model  \\
\hline
$x_1$ & 0.709  & 0.566 & 0.741 & 0.749   \\
$x_2$ & 0.705  & -0.824 & 0.672 & 0.663  \\  
\hline
Principal angle $\theta$ & N/A  & 1.390 & 0.046 & 0.058  \\
\hline
\end{tabular}
\caption{True and fitted first principal axes for the selected example of Experiment 2B (Figure 4\&5: Right)}
\end{table}

To further evaluate the performance of different PPCA methods, particularly in higher dimensions, we next conduct 20-dimensional ($D=20$) experiments using the similar setting as in the two-dimensional experiments 2A and 2B. In each simulation, we simulate $200$ random samples from a 20-dimensional normal distribution with variance $\sigma^2=1$ for each component and correlation coefficient $\rho=0.5$. Next, we add outliers in the same two ways as before in the two experiments. In the first experiment, we generate 20 outliers where each one follows a uniform distribution $[-10,10]^{20}$, and we denote it as Experiment 20A. The second way is to add only 5 outliers where each one is generated from a uniform distribution $[-25,25]^{20}$, and we denote it as Experiment 20B. Different from the two-dimensional experiments, here we do not only evaluate the first principal axes $(d=1)$, but also the two-dimensional ($d=2$) and three-dimensional principal subspaces $(d=3)$. For each of Experiment 20A and 20B, we run 100 simulations, and the expected value and standard error (in the subscripts) of the first principal angle between the true principal subspaces ($d=1,2,3$) and the fitted ones from different models are shown in Table 4 and 5.

\begin{table}[h]
\centering
\begin{tabular}{|c|c|c|c|}
\hline
First principal angle & Standard PPCA & Marginal $t$-model & C\&L $t$-model  \\
\hline
$d=1$ & $0.456_{(0.017)}$  & $0.020_{(0.0004)}$  & $0.022_{(0.0004)}$   \\
\hline
$d=2$ & $0.356_{(0.010)}$  & $0.019_{(0.0004)}$  & $0.021_{(0.0004)}$   \\
\hline
$d=3$ & $0.297_{(0.007)}$  & $0.018_{(0.0004)}$  & $0.021_{(0.0005)}$   \\
\hline
\end{tabular}
\caption{The average first principal angles and standard errors in Experiment 20A with principal subspace dimension $d=1,2,3$}
\end{table}

\begin{table}[h]
\centering
\begin{tabular}{|c|c|c|c|}
\hline
First principal angle & Standard PPCA & Marginal $t$-model & C\&L $t$-model  \\
\hline
$d=1$ & $1.274_{(0.022)}$  & $0.018_{(0.0004)}$  & $0.020_{(0.0004)}$   \\
\hline
$d=2$ & $1.058_{(0.019)}$  & $0.017_{(0.0004)}$  & $0.020_{(0.0004)}$   \\
\hline
$d=3$ & $0.820_{(0.017)}$  & $0.015_{(0.0004)}$  & $0.018_{(0.0005)}$   \\
\hline
\end{tabular}
\caption{The average first principal angles and standard errors in Experiment 20B with principal subspace dimension $d=1,2,3$}
\end{table}

From both the two-dimensional (2A\&2B) and 20-dimensional (20A\&20B) experiments, we can see that the standard PPCA fails to recover the true principal axes accurately, under two different outlier structures. On the other hand, both the C\&L $t$-model and the Marginal $t$-model estimate the principal axes quite close to true ones, with considerably small standard errors. The small value of principal angles and standard errors together imply the robustness of both the C\&L $t$-model and the Marginal $t$-model under the presence of outliers. In the conducted experiments, the C\&L $t$-model and the Marginal $t$-model produce extremely close estimation accuracy and standard error.

However, the benefit of using Marginal $t$-models is twofold, one is the simpler computation, and two the marginal multivariate $t$-distribution structure on the data variable $\bm{x}$. Since the majority of the data points will have an elliptical structure in many cases, this means that even though Marginal $t$-models fail to capture a small amount of outliers accurately, the good fit to most normal points will eventually lead a overall reasonably satisfying result. In terms of C\&L $t$-models, the main drawback is the heavy computation since we introduce a Monte Carlo step into the E-step to compute the expectations. On the model fitting side, C\&L $t$-models will generally have a better fit to the outliers since it does not have strong restrictions to the structure of outliers, and there are two degrees of freedom adjusting the overall robustness in different directions. However, since the marginal distribution of data $\bm{x}$ is neither longer multivariate $t$ nor elliptical, C\&L $t$-models might lose some goodness of fit to the main part of the data, which are not outliers. This is why C\&L $t$-models work worse than Marginal $t$-models in some cases, even though they capture outliers better from the distribution point of view. 

\section{Conclusion}
This paper proposes several robust probabilistic principal component analysis models based on multivariate $t$-distributions and their corresponding hierarchical structures. It also gives detailed derivations of those equivalence between models, and clarifies the results of \cite{archambeau2006robust} and several follow-up papers. Interpretations and comparison between models are discussed in detail, through both theory and simulations. 

As we see in the simulation study, both C\&L $t$-models and Marginal $t$-models work reasonably well in some cases, but Marginal $t$-models are more computationally efficient. Therefore, we would still suggest to use the existing robust PPCA algorithm, the Marginal $t$-models, as a robust PPCA model for most applications. However, when one is going to modify the model and derive the corresponding EM algorithm, care is needed when transforming from the multivariate $t$-model to the corresponding hierarchical Normal-Gamma model.

\section*{Appendix}
In this appendix, we will derive (32) and (33) involved in the E-step of the MCEM algorithm for C\&L $t$-models in Section 4. 
\begin{proposition}
    The conditional distribution $u_1|u_2, \bm{z},\bm{x}$ is a Gamma distribution (32):
    $$u_1|u_2,\bm{z},\bm{x} \sim \mathrm{Ga} \left(\frac{\nu_1+q}{2},\frac{\nu_1+||\bm{x}-\bm{Wz}_n-\bm{\mu}||^2/\sigma^2}{2}  \right).
    $$
\end{proposition}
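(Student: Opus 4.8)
The plan is to obtain the conditional law of $u_1$ by a direct application of Bayes' rule, reducing the problem to the normal--gamma conjugacy already recorded as result (3). The only genuine ingredient beyond (3) is a conditional-independence observation read off the DAG of the C\&L $t$-model: since $u_1$ has $\bm{x}$ as its only child and is independent of both $u_2$ and $\bm{z}$ (the edge $u_2 \to \bm{z} \to \bm{x}$ is blocked at $\bm{z}$ once we condition on $\bm{z}$), the conditioning variables $u_2$ and $\bm{z}$ enter the posterior for $u_1$ only through the likelihood $p(\bm{x}\mid \bm{z}, u_1)$.

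Concretely, the steps I would carry out are as follows. First, write the joint density implied by the hierarchical model,
\[
p(\bm{x},\bm{z},u_1,u_2) = p(\bm{x}\mid \bm{z},u_1)\, p(\bm{z}\mid u_2)\, p(u_1)\, p(u_2),
\]
and divide by $p(\bm{x},\bm{z},u_2)$ to conclude that, as a function of $u_1$ with $(u_2,\bm{z},\bm{x})$ held fixed,
\[
p(u_1\mid u_2,\bm{z},\bm{x}) \;\propto\; p(\bm{x}\mid \bm{z},u_1)\, p(u_1),
\]
since the factors $p(\bm{z}\mid u_2)$ and $p(u_2)$ do not involve $u_1$. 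Second, recognize the right-hand side as exactly the setup of (3): the likelihood $p(\bm{x}\mid \bm{z},u_1)$ is the $q$-dimensional normal $N\!\left(\bm{Wz}+\bm{\mu},\sigma^2\bm{I}/u_1\right)$ (here $q$ is the dimension of $\bm{x}$), and the prior $p(u_1)$ is $\mathrm{Ga}(\nu_1/2,\nu_1/2)$, so (3) applies with $\alpha=\beta=\nu_1/2$, scale matrix $\bm{\Sigma}=\sigma^2\bm{I}$, and mean vector $\bm{Wz}+\bm{\mu}$. Third, read off the posterior shape and rate parameters from (3) and simplify the quadratic form,
\[
(\bm{x}-\bm{Wz}-\bm{\mu})^{T}(\sigma^2\bm{I})^{-1}(\bm{x}-\bm{Wz}-\bm{\mu}) = \frac{\|\bm{x}-\bm{Wz}-\bm{\mu}\|^2}{\sigma^2},
\]
which yields $u_1\mid u_2,\bm{z},\bm{x}\sim \mathrm{Ga}\!\left(\tfrac{\nu_1+q}{2},\tfrac{\nu_1+\|\bm{x}-\bm{Wz}-\bm{\mu}\|^2/\sigma^2}{2}\right)$, as claimed.

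There is no serious obstacle here: the computation is a textbook conjugacy calculation. The one point that deserves care — and the only place a reader could object — is the independence step, namely the claim that $u_2$ and $\bm{z}$ drop out of the normalizing argument; I would state explicitly that this is because $u_1\perp(u_2,\bm{z})$ a priori (so $p(u_1\mid u_2,\bm{z})=p(u_1)$) together with the model's conditional factorization, both of which are immediate from the DAG. Everything else is bookkeeping.
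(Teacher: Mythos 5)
Your proposal is correct and follows essentially the same route as the paper's proof: both start from the joint factorization $p(\bm{x},\bm{z},u_1,u_2)=p(\bm{x}\mid\bm{z},u_1)p(\bm{z}\mid u_2)p(u_1)p(u_2)$, isolate the factors involving $u_1$, and apply the normal--gamma conjugacy (3) with $\alpha=\beta=\nu_1/2$. The only cosmetic difference is that the paper proceeds in two stages (first deriving $u_1\mid\bm{z},\bm{x}$ after noting $u_1\perp\bm{z}$, then showing the extra conditioning on $u_2$ is vacuous), whereas you collapse this into a single application of Bayes' rule; the underlying argument is identical.
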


\begin{proof}
Recall that the joint density of $(\bm{x},\bm{z},u_1,u_2)$ in C\&L $t$-models can be decomposed from the DAG to:
\begin{equation}
    p(\bm{x},\bm{z},u_1,u_2)=p(\bm{x}|\bm{z},u_1)p(\bm{z}|u_2)p(u_1)p(u_2).
\end{equation}
It is intuitive that $u_1$ and $\bm{z}$ are independent, since the distribution of $\bm{z}$ will be completely determined by $u_2$, which is independent from $u_1$ by assumption. We can also show this formally:
\begin{align}
    p(\bm{z},u_1)&= \iint p(\bm{x}|\bm{z},u_1)p(\bm{z|u_2})p(u_1)p(u_2)\mathrm{d}\bm{x}\mathrm{d} u_2= p(u_1) p(\bm{z}),
\end{align}
which implies $u_1|\bm{z}\overset{d}{=}u_1 \sim \mathrm{Ga} (\frac{\nu_1}{2},\frac{\nu_1}{2})$. Combining with $(\bm{x}|\bm{z},u_1)\sim N(\bm{Wz}+\bm{\mu},\frac{\sigma^2}{u_1}\bm{I})$, we can use the normal-gamma conjugacy described in (3) to obtain the distribution of $(u_1|\bm{z},\bm{x})$:
\begin{equation}
    u_1|\bm{z},\bm{x} \sim \mathrm{Ga} \left(\frac{\nu_1+q}{2},\frac{\nu_1+||\bm{x}-\bm{Wz}_n-\bm{\mu}||^2/\sigma^2}{2}  \right).
\end{equation}
Next, we show $(u_1|u_2,\bm{z},\bm{x})\overset{d}{=}(u_1|\bm{z},\bm{x})$, which means $u_1$ and $u_2$ are independently conditional on $(\bm{z},\bm{x})$. Because $u_2$ determines $\bm{z}$ only, all the information of $u_2$ is known as long as we know $\bm{z}$. Formally,
\begin{align}
    p(u_1|u_2,\bm{z},\bm{x})&=\frac{p(\bm{x}|\bm{z},u_1)p(\bm{z}|u_2)p(u_1)p(u_2)}{\int p(\bm{x}|\bm{z},u_1)p(\bm{z}|u_2)p(u_1)p(u_2)\mathrm{d}u_1}= p(u_1|\bm{z},\bm{x}).
\end{align}
Therefore, the required conditional distribution $u_1|u_2,\bm{z},\bm{x}$ also takes the same form as (47):
\begin{equation}
    u_1|u_2,\bm{z},\bm{x} \sim \mathrm{Ga} \left(\frac{\nu_1+q}{2},\frac{\nu_1+||\bm{x}-\bm{Wz}_n-\bm{\mu}||^2/\sigma^2}{2}  \right),
\end{equation}
as proposed in (32).
\end{proof}

\vspace{0.5cm}

\begin{proposition}
    The conditional distribution $u_2|u_1, \bm{z},\bm{x}$ is a Gamma distribution (33):
    $$u_2|u_1,\bm{z},\bm{x} \sim \mathrm{Ga} \left(\frac{\nu_2+d}{2},\frac{\nu_2+||\bm{z}||^2}{2}  \right).
    $$
\end{proposition}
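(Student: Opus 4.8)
The plan is to follow exactly the strategy used in the proof of Proposition 1, exploiting the conditional independence structure encoded in the DAG of the C\&L $t$-model. First I would record the factorization of the joint density, $p(\bm{x},\bm{z},u_1,u_2) = p(\bm{x}|\bm{z},u_1)p(\bm{z}|u_2)p(u_1)p(u_2)$, which reads off directly from the DAG. The structural observation that drives everything is that $u_2$ enters the joint density only through the single factor $p(\bm{z}|u_2)$: it is a parent of $\bm{z}$ alone, and it is independent of $u_1$ by assumption.

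Next I would establish that $p(u_2|u_1,\bm{z},\bm{x}) = p(u_2|\bm{z})$. Forming the Bayes-rule quotient $p(u_2|u_1,\bm{z},\bm{x}) = p(\bm{x},\bm{z},u_1,u_2)\big/\!\int p(\bm{x},\bm{z},u_1,u_2)\,\mathrm{d}u_2$, the factors $p(\bm{x}|\bm{z},u_1)$ and $p(u_1)$ do not involve $u_2$ and cancel between numerator and denominator, leaving $p(u_2|u_1,\bm{z},\bm{x}) = p(\bm{z}|u_2)p(u_2)\big/\!\int p(\bm{z}|u_2)p(u_2)\,\mathrm{d}u_2 = p(u_2|\bm{z})$. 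Thus, once $\bm{z}$ is known, further conditioning on $u_1$ and $\bm{x}$ adds no information about $u_2$, mirroring the argument in (48)--(49).

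Then I would identify $p(u_2|\bm{z})$ via the normal--gamma conjugacy stated in (3): taking the $d$-dimensional likelihood $\bm{z}|u_2 \sim N(\bm{0},\bm{I}/u_2)$ together with the prior $u_2 \sim \mathrm{Ga}(\nu_2/2,\nu_2/2)$ gives the posterior $u_2|\bm{z} \sim \mathrm{Ga}\!\left(\frac{\nu_2+d}{2},\,\frac{\nu_2+\bm{z}^T\bm{z}}{2}\right)$, which is precisely the computation already performed in (16). Combining the two steps yields the claimed distribution for $u_2|u_1,\bm{z},\bm{x}$, with $\bm{z}^T\bm{z} = \|\bm{z}\|^2$.

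I do not anticipate any real obstacle: the result is essentially a transcription of Proposition 1 with the roles of the two scale variables interchanged and the conjugacy (3) applied on the latent side rather than the data side. The only point needing a little care is to justify the conditional-independence step by the explicit cancellation in the quotient above rather than by an informal appeal to the graph; everything else is routine.
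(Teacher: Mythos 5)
Your proposal is correct and follows essentially the same route as the paper's own proof: establish $p(u_2|u_1,\bm{z},\bm{x})=p(u_2|\bm{z})$ from the factorization of the joint density, then apply the normal--gamma conjugacy (3) to the pair $\bm{z}|u_2$ and $u_2$. The only difference is cosmetic --- you perform the conditional-independence reduction before the conjugacy step and spell out the cancellation of the $u_2$-free factors explicitly, which the paper leaves implicit in the quotient (51).
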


\begin{proof}
    Since $u_2 \sim \mathrm{Ga} (\frac{\nu_2}{2},\frac{\nu_2}{2})$, and $\bm{z}|u_2\sim N(\bm{0},\frac{\bm{I}}{u_1})$, we can use the normal-gamma conjugacy (3) to obtain the distribution of $u_2|\bm{z}$:
\begin{equation}
    u_2|\bm{z} \sim \mathrm{Ga} \left(\frac{\nu_2+d}{2},\frac{\nu_2+||\bm{z}||^2}{2}  \right).
\end{equation}
Next we show that $(u_2|u_1,\bm{z},\bm{x})\overset{d}{=}(u_2|\bm{z})$, which means $u_2$ and $(u_1,\bm{x})$ are independently conditional on $\bm{z}$. Since $u_2$ only directly determines $\bm{z}$, it means that as long as we know $\bm{z}$, there will be no extra information about $u_2$ even though we know $\bm{x}$ and/or $u_1$ additionally. We can also show this formally:
\begin{align}
    p(u_2|u_1,\bm{z},\bm{x})&=\frac{p(\bm{x}|\bm{z},u_1)p(\bm{z}|u_2)p(u_1)p(u_2)}{\int p(\bm{x}|\bm{z},u_1)p(\bm{z}|u_2)p(u_1)p(u_2)\mathrm{d}u_2}= p(u_2|\bm{z}).
\end{align}
Therefore, the required conditional distribution $u_2|u_1,\bm{z},\bm{x}$ also takes the same form as (50):
\begin{equation}
    u_2|u_1,\bm{z},\bm{x} \sim \mathrm{Ga} \left(\frac{\nu_2+d}{2},\frac{\nu_2+||\bm{z}||^2}{2}  \right).
\end{equation}
as proposed in (33).
\end{proof}

\bibliography{Reference}

\begin{thebibliography}{13}
\providecommand{\natexlab}[1]{#1}
\providecommand{\url}[1]{\texttt{#1}}
\expandafter\ifx\csname urlstyle\endcsname\relax
  \providecommand{\doi}[1]{doi: #1}\else
  \providecommand{\doi}{doi: \begingroup \urlstyle{rm}\Url}\fi

\bibitem[Archambeau et~al.(2006)Archambeau, Delannay, and
  Verleysen]{archambeau2006robust}
C{\'e}dric Archambeau, Nicolas Delannay, and Michel Verleysen.
\newblock Robust probabilistic projections.
\newblock In \emph{Proceedings of the 23rd International conference on machine
  learning}, pages 33--40. ACM, 2006.

\bibitem[Archambeau et~al.(2008)Archambeau, Delannay, and
  Verleysen]{archambeau2008mixtures}
C{\'e}dric Archambeau, Nicolas Delannay, and Michel Verleysen.
\newblock Mixtures of robust probabilistic principal component analyzers.
\newblock \emph{Neurocomputing}, 71\penalty0 (7-9):\penalty0 1274--1282, 2008.

\bibitem[Bishop(2006)]{bishop2006pattern}
Christopher~M Bishop.
\newblock \emph{Pattern recognition and machine learning}.
\newblock springer, 2006.

\bibitem[Campbell(1980)]{campbell1980robust}
Norm~A Campbell.
\newblock Robust procedures in multivariate analysis i: Robust covariance
  estimation.
\newblock \emph{Journal of the Royal Statistical Society: Series C (Applied
  Statistics)}, 29\penalty0 (3):\penalty0 231--237, 1980.

\bibitem[Cand{\`e}s et~al.(2011)Cand{\`e}s, Li, Ma, and
  Wright]{candes2011robust}
Emmanuel~J Cand{\`e}s, Xiaodong Li, Yi~Ma, and John Wright.
\newblock Robust principal component analysis?
\newblock \emph{Journal of the ACM (JACM)}, 58\penalty0 (3):\penalty0 11, 2011.

\bibitem[Chen et~al.(2009)Chen, Martin, and Montague]{chen2009robust}
Tao Chen, Elaine Martin, and Gary Montague.
\newblock Robust probabilistic pca with missing data and contribution analysis
  for outlier detection.
\newblock \emph{Computational Statistics \& Data Analysis}, 53\penalty0
  (10):\penalty0 3706--3716, 2009.

\bibitem[Dempster et~al.(1977)Dempster, Laird, and Rubin]{dempster1977maximum}
Arthur~P Dempster, Nan~M Laird, and Donald~B Rubin.
\newblock Maximum likelihood from incomplete data via the em algorithm.
\newblock \emph{Journal of the Royal Statistical Society: Series B
  (Methodological)}, 39\penalty0 (1):\penalty0 1--22, 1977.

\bibitem[Gai et~al.(2008)Gai, Li, and Stevenson]{gai2008robust}
Jiading Gai, Yong Li, and Robert~L Stevenson.
\newblock Robust bayesian pca with student’s t-distribution: the variational
  inference approach.
\newblock In \emph{2008 15th IEEE International Conference on Image
  Processing}, pages 1340--1343. IEEE, 2008.

\bibitem[Kibria and Joarder(2006)]{kibria2006short}
BM~Golam Kibria and Anwar~H Joarder.
\newblock A short review of multivariate t-distribution.
\newblock \emph{Journal of Statistical research}, 40\penalty0 (1):\penalty0
  59--72, 2006.

\bibitem[Liu and Rubin(1995)]{liu1995ml}
Chuanhai Liu and Donald~B Rubin.
\newblock Ml estimation of the t distribution using em and its extensions, ecm
  and ecme.
\newblock \emph{Statistica Sinica}, pages 19--39, 1995.

\bibitem[Strang et~al.(1993)Strang, Strang, Strang, and
  Strang]{strang1993introduction}
Gilbert Strang, Gilbert Strang, Gilbert Strang, and Gilbert Strang.
\newblock \emph{Introduction to linear algebra}, volume~3.
\newblock Wellesley-Cambridge Press Wellesley, MA, 1993.

\bibitem[Tipping and Bishop(1999)]{tipping1999probabilistic}
Michael~E Tipping and Christopher~M Bishop.
\newblock Probabilistic principal component analysis.
\newblock \emph{Journal of the Royal Statistical Society: Series B (Statistical
  Methodology)}, 61\penalty0 (3):\penalty0 611--622, 1999.

\bibitem[Zhu et~al.(2014)Zhu, Ge, and Song]{zhu2014robust}
Jinlin Zhu, Zhiqiang Ge, and Zhihuan Song.
\newblock Robust modeling of mixture probabilistic principal component analysis
  and process monitoring application.
\newblock \emph{AIChE journal}, 60\penalty0 (6):\penalty0 2143--2157, 2014.

\end{thebibliography}

\end{document}